\setlist[itemize]{nosep}
\setlist[enumerate]{nosep}
\newtheorem*{theorem*}{Theorem}
\newtheorem{theorem}{Theorem}
\newtheorem{lemma}{Lemma} 
\newtheorem{definition}{Definition} 
\newtheorem{assumption}{Assumption}
\newtheorem{critique}{Critique}
\newtheorem{property}{Property}
\newtheorem{claim}{Claim}
\theoremstyle{remark}
\DeclareMathOperator*{\argmax}{arg\,max}
\crefname{equation}{}{}
\begin{document}
	
\title{Equality of Opportunity and Opportunity Pluralism*}
\thanks{*This work is a generalization of the Master's dissertation I defended in September 2019 under the supervision of Professor François Maniquet at UCLouvain. I am heavily indebted to Professor Maniquet for his many suggestions and comments. }
	 
\author{Giovanni Valvassori Bolgè\textsuperscript{\textdagger}}\thanks{\textsuperscript{\textdagger}University of Fribourg, Department of Economics. \textit{Email: \href{mailto:giovanni.valvassoribolge@unifr.ch}{giovanni.valvassoribolge@unifr.ch}}}
	
\date{\today}

\begin{abstract} 

	This paper seeks to explore the potential trade-off arising between the theories of \textit{Equality of Opportunity} and \textit{Opportunity Pluralism}. Whereas the first theory has received much attention in the literature on Welfare Economics, the second one has only recently been introduced with the publication of the book by Joseph Fishkin, \textit{Bottlenecks: A New Theory of Equal Opportunity}. After arguing extensively that any notion of human flourishing is incompatible with traditional theories of \textit{Equality of Opportunity}, the author proposes an alternative theory squarely based on a broad notion of human development. This paper seeks to formalize the argument made in this book through the lens of economic theory. My analysis suggests that traditional theories of \textit{Equality of Opportunity} are not incompatible with  \textit{Opportunity Pluralism}.

	\bigskip
	\noindent
	\textbf{JEL Codes:} D60, D63, I24, I3
	
	\noindent
	\textbf{Keywords:} Equality of Opportunity, Education and Inequality, Distributive Justice
\end{abstract}

\maketitle

\newpage

\section{Introduction}\label{sec1}

The theoretical foundations of Equality of Opportunity (EOp, henceforth) might be traced back to the pioneering contributions of the political philosopher John Rawls (\cite{rawls1958}, \cite{rawls1971}). By crucially integrating personal responsibility with egalitarianism, Rawls paved the way for a large literature in Political Philosophy; soon, other major contributions followed, most notably, from \cite{sen1980} and \cite{dworkin1981}.

More recently, the theory of EOp has found its way into normative economics, owing to the work of John Roemer (\cite{roemer1993}, \cite{roemer2000}). Ever since, there has been a strong development both in the theoretical underpinnings and in the empirical applications of EOp.\footnote{For a detailed survey, see \cite{roemer2016}.} According to \cite{maniquet2017}, these developments can be rationalized in three different approaches within Economics.

The first approach is constituted by the theory of \textit{fair resource allocation}. Starting from the fact that individuals are heterogeneous along several characteristics, economic theorists have focused on devising resource allocation mechanisms which are compatible with different ethical principles of social justice.

Building on the work of John Roemer, \cite{fleurbaey2011} propose to divide the relevant individual characteristics into two sets: the \textit{compensation characteristics} are those whose inequality between individuals is deemed as ethically objectionable (e.g. innate abilities); the \textit{responsibility characteristics} are constituted by those individual traits for which individuals should be held accountable (\textit{effort} variables; e.g. schooling). Several theories have been advanced to identify the cut between compensation and responsibility variables. Although they differ in the particular ethical principles which inspire them, they all agree that society should redistribute resources in order to compensate the inequality deriving from the first set of characteristics, while remaining neutral with respect to the inequality stemming from the second set.\footnote{For a thorough discussion, see, \textit{inter alia}, \cite{fleurbaey2008}.}

The second approach relates to the theory of income inequality measurement. Following the groundbreaking contributions of \cite{pigou1912} and \cite{dalton1920}, economists have been able to decompose the income distribution as the sum of different components, some of which calling for compensation and others for neutrality by society (see \cite{peragine2002} and \cite{bourguignon2007}).

Measuring the extent to which the income distributions differ from one of perfect equality due to ethically objectionable characteristics is the basis of many different empirical studies, most notably the World Bank \textit{Report on Equity and Development} (\cite{worldbank2006}) as well as the thorough empirical analysis for Latin America in \cite{paesdebarros2009}.

The third approach has been devised within the theory of inequality of opportunity measurement. By considering the set of opportunities open to each individual, economists have focused their analysis not only on the elements of the sets eventually chosen by individuals, but also on the cardinality of the sets themselves (e.g. \cite{thomson2011}; \cite{kranic1998}).

More recently, an insightful critique of the theoretical foundations of EOp has been published by the political theorist Joseph Fishkin in the book \textit{Bottlenecks: A New Theory of Equal Opportunity} (\cite{fishkin2014}).

The author claims to advance a radically new way of conceptualizing  EOp --- which he refers to as \textit{Opportunity Pluralism} --- building a novel theory squarely on an account of human development.

Unlike traditional theories of EOp which --- the author argues --- have narrowly focussed on the \textit{instrumental} role of opportunities, \textit{Opportunity Pluralism} ascribes a central role to the \textit{developmental} features of opportunities, namely the broadly defined way in which opportunities shape individual development.

In particular, the first condition of an ideal society satisfying \textit{Opportunity Pluralism} is the presence of a ``plurality of values and goals'', which I refer to as \textit{Plurality of Opportunity} (POp, henceforth). In the words of \cite{fishkin2014},
\begin{quote}
    ``People in this society hold diverse conceptions of the good, of what kinds of lives and forms of flourishing they value, and of what specific goods and roles they want to pursue [...].''\footnote{\cite{fishkin2014}, p. 135.}
\end{quote}
Importantly, Fishkin argues that POp is incompatible with traditional EOp theories. His line of reasoning suggests the existence of a trade-off faced by society: either society intervenes to equalize opportunities on a chosen scale in the traditional sense or it promotes \textit{Opportunity Pluralism}, by removing the different types of \textit{bottlenecks} which constrain the individual development of its citizens.

The present paper aims at evaluating two salient critiques moved by Joseph Fishkin to EOp theories through the lens of basic economic reasoning. First, the purpose is to evaluate his claim regarding the impossibility of incorporating the developmental role of opportunities into the standard framework of EOp adopted by economists. Second, I investigate whether this conceptualization of EOp is incompatible with POp, and therefore with \textit{Opportunity Pluralism}, as Fishkin argues.

In order to do this, I build a simple two-period, two-sector model in an economy populated by agents with heterogeneous productive skills over the two sectors. An egalitarian government seeks to equalize opportunities among agents through educational investments in either of two schools, where skills in each sector are nurtured.

The preferences of the agents are uncertain and endogenous to educational investments. Agents do not know their preferences over the two sectors when educational investments take place. Moreover, the realization of preferences will depend on the bundles of opportunities received.

I show that a \textit{pluralistic opportunity structure}, namely a society with two schools, yields a higher social welfare than a \textit{unitary opportunity structure}, a society with a single school.

Therefore, my analysis shows that EOp \textit{implies} a pluralistic opportunity structure, the notion at the heart of POp, and consequently of \textit{Opportunity Pluralism} as well. As a result of this, I claim that the theories of EOp and \textit{Opportunity Pluralism} are not incompatible.

I structure the discussion as follows. In \Cref{sec2} and \Cref{sec3}, I briefly describe the main lines of reasoning of Fishkin's contribution, articulating them into two distinct critiques to EOp. \Cref{sec4} lays out the theoretical framework through which I frame those critiques. \Cref{sec:resourceq} and \Cref{sec:utilityeq} consider the nature and evaluation of the different education policies. \Cref{sec:conclusions} concludes.

\section{Opportunities and Human Development}
\label{sec2}

Fishkin envisions human flourishing as a complex, multi-stage process in which three sets of factors interact with each other: the environment, comprehensive of \textit{developmental opportunities}; the person, as constituted by his/her capacities, traits and genetic heritage; his/her ambitions, goals and efforts, namely the ``multidimensional path" a person is embarking on as a result of the previous layer of interactions between the environment and his/her personal and social background.

Moreover, this process is iterative. Some unusual capacities at an early age will affect the range of developmental opportunities provided by one's family and environment; importantly, also the converse holds. In the author's own words:
\begin{quote}
    ``... our ambitions, goals and efforts do not emerge fully formed from the ether, but are instead products of our lived experience; they, in turn, influence other aspects of the processes by which we develop traits and capacities, convince others to recognize our capacities, prove ``our" merit, and secure jobs and other social roles. Our decisions about how to direct our efforts are in part a function of the choice set of paths and options we see before us at each stage - as well as our own conclusions, mediated by the conclusions of others, about our merit and potential."\footnote{Ibid. p.115.}
\end{quote}
According to Fishkin, if we embrace this perspective with regard to human development, the theoretical foundations of any framework for \textit{equalizing} opportunities begin to crumble.
In fact, the author argues:
\begin{quote}
    ``Equalizing developmental opportunities means that we ought to arrange the different bundles of developmental opportunities that children receive on some sort of scale; then we aim for state of affairs in which everyone has a bundle of opportunities of equal value, or as close as possible to equal value, on that scale."\footnote{Ibid. p.116.}
\end{quote}
This line of reasoning is fatally flawed because:
\begin{quote}
    ``... the more carefully we think through the iterative interactions that make up human development, the less clear it becomes what `equal' developmental opportunities could possibly mean."\footnote{Ibid. p.116.}
\end{quote}
As a result of this, 
\begin{quote}
    ``... it becomes impossible either to arrange the different bundles of opportunities on a single scale or to identify any one set of developmental opportunities that could function as a fair baseline of equality for everyone."\footnote{Ibid. p.116.}
\end{quote}
This difficulty, the \textit{common scale problem}, is due to two other problems: the \textit{incommensurability} of bundles of opportunity and the \textit{endogeneity} of preferences.

By incommensurability, Fishkin means ``the problem that different developmental opportunities are valuable for different reasons, and some of those reasons are incommensurable with one another."\footnote{Ibid. p.119.}

This problem is compounded by the one of the endogeneity of preferences. This builds directly on the model of iterative human development discussed above. The layers of interactions between the environment, the genetic heritage and the set of personal traits all contribute to shaping one's values, goals and ambitions. As a consequence, 
\begin{quote}
    ``... the endogeneity of our preferences and values --- specifically, their dependence on our developmental opportunities and experiences --- can make it difficult or impossible to say \textit{ex ante} which bundles of developmental opportunities are best for a person."\footnote{Ibid. p.124.}
\end{quote}
I summarize this discussion as the first fundamental critique of Fishkin:
\begin{critique} \label{critique1}
It is impossible to build a metric of opportunities so as to take their developmental role into account.
\end{critique}

\section{The Role of the Opportunity Structure and the Trade-off}
\label{sec3}
The shape of a society's \textit{opportunity structure} --- Fishkin argues --- greatly affects the degree to which the common scale problem binds. In the author's own words,
\begin{quote}
    ``... the unitary or pluralistic character of the opportunity structure has an interesting effect on the commensurability of different bundles of opportunities. The more unitary the opportunity structure, the more often we will be able to say objectively that a given bundle of opportunities is better or worse than another."\footnote{Ibid. p.126.}
\end{quote}
In the extreme case of the ``big test society", to Fishkin refers to many times throughout the book --- namely, a fictitious society in which access to all sought-after jobs and positions in adulthood depends on the score on a single test undertaken at age 18 --- only those skills essential to passing the test are crucial. In this regard, deciding which bundles of developmental opportunities are better or worse becomes much easier for an ideal, benevolent planner. However, such a planner should advocate exactly the opposite, since a unitary opportunity structure exacerbates all the intrinsic problems of equal opportunity which Fishkin discusses in Part I of his book.

In the limiting case of the big test society, in fact,  \textit{the problem of the family} binds even more tightly because parents have every incentive to maximize the development opportunities for their children to score well on the test. Hence, more affluent families will have an unfair advantage over poorer ones.

This, in turn,  triggers \textit{the problem of the merit} and of the \textit{starting gate}: in a fascinating treatment of the subject, Fishkin argues extensively that there will be ``no fair way to equalize opportunities among people whose developmental opportunities were unequal."\footnote{Ibid. p.128.}

Lastly, a unitary opportunity structure is the worst possible society as far as \textit{the problem of the individuality} is concerned: there is only one goal everyone agrees about; life is really a race. Therefore, there is no space for ``anyone to develop in other directions, devote themselves to other goals, or to carve out new paths of their own."\footnote{Ibid. p.128.}

All this discussion leads the author to formulate the following bold conclusion:
\begin{quote}
    ``... even though a unitary opportunity structure might help make the problem of equalizing opportunity \textit{conceptually} more tractable, its \textit{substantive} effect is just the opposite: it makes all the things that made us value equal opportunity in the first place that much harder to achieve."\footnote{Ibid. p.128.}
\end{quote}
This constitutes the second fundamental critique moved by Fishkin to traditional EOp theories:
\begin{critique}
    Equalizing opportunities within a pluralistic opportunity structure is intractable.
\end{critique}
Political philosophers and economic theorists seem to be stuck: either they focus on a unitary opportunity structure --- a ``unidimensional society" which severely hinders human flourishing  --- or defining EOp within a pluralistic opportunity structure becomes conceptually intractable, leading to the necessity of elaborating the new theory of \textit{Opportunity Pluralism}.

In fact, the two critiques point to the existence of a fundamental trade-off between EOp and \textit{Opportunity Pluralism}.

\section{The Theoretical Framework}
\label{sec4}
Since the language and the tools of economists and political philosophers differ, I make two fundamental remarks about the way I frame the theoretical claims moved by Fishkin through the lens of economic theory. The first one relates to the respect of preferences. The second one deals with the way I model human development. 

One of the normative values which provide the basis of economists' toolbox is the respect of preferences. People are autonomous moral agents who have the right to choose their own notion of good life, without it being imposed from an external authority. Therefore, an individual's choices can reasonably be interpreted as revealing information about the nature of this notion. Under this assumption, economists construct metrics of well-being to measure different bundles of opportunities. The key point is that this is not incompatible with viewing opportunities as a crucial factor in shaping one's values, goals and ambitions. Even if economists held the very same view as Fishkin, it would still be justifiable to construct measures of individual well-being, building on what revealed (and well-informed) preferences induce individuals to choose.\footnote{For a thorough discussion, see \cite{fleurbaey2011} and \cite{maniquet2017}.}

It seems that Fishkin's argument that it is impossible to come up with a measure of well-being for different bundles of opportunities is actually incompatible with the respect of preferences.\footnote{See the discussion leading up to \Cref{critique1}.} In this regard, I depart from Fishkin's view and I adhere to the economic principle of respecting individual preferences. This enables to adopt utility theory and model an economy in which bundles of opportunities can be arranged according to a common scale.

Second, I model human development by assuming that preferences are both uncertain and endogenous. In this way, I try to follow Fishkin's claim that one's preferences evolves over time, depending on those multi-stage interactions between one's genetic endowment, the environment and the opportunities provided in a way that cannot be thoroughly predicted \textit{ex ante}, by talents at birth.

\subsection{The Model}
The economy is populated by a continuum of individuals, and an egalitarian government seeks to equalize opportunities among them. There are two occupations, \textit{a} and \textit{b}, each of which requires a different skill. Individuals are heterogeneous with respect to the talents (or marginal productivities) in the two sectors; individual $i$ 's talents are represented by a vector $\boldsymbol{t}^{i}= (t_{a}^{i} , t_{b}^{i}) \in  \Re_{+}^{2}$. 
\begin{definition}
    A distribution of talents is \textit{pluralistic} if there are at least two individuals --- $i$ and $j$ --- such that $t_{a}^{i}>t_{b}^{i}$ and $t_{b}^{j}>t_{a}^{j}$.
\end{definition}
As a minimal requirement of heterogeneity among individuals, we make the following
\begin{assumption}
The distribution of talents is \textit{pluralistic}.
\end{assumption}
Bundles of opportunities are represented by educational investments in either of two schools, $s_a$ and $s_b$, depending on whether the marginal productivity in occupation $a$ or $b$ is nurtured. 
Individuals are also heterogeneous with respect to a taste parameter $\theta$, a random variable with distribution $F$ and density function $f$. This parameter represents the preference of each individual for occupation $a$ relative to $b$. As it will be explained below, this preference parameter captures the uncertainty and the endogeneity of preferences.
\paragraph{Timing}There are two periods, $t=1,2$.
At $t=1$, the egalitarian government observes the talents of the individuals and assigns them to either of two schools, $s_a$ and $s_b$, where, depending on the amount of educational expenditure received, they will increase their productivity in either sector \textit{a} or \textit{b}, respectively.\footnote{I assume that, within the same school --- $a$ or $b$ --- the amount of educational expenditure is individual-specific.}

Then, at $t=2$, individuals observe the realization of $\theta$ and decide whether to work in sector \textit{a} or \textit{b}, depending on the value of $\theta$ and the wages that they will earn in occupation \textit{a} or \textit{b}.

\paragraph{The Technology of Educational Investments}
The wages each individual would earn on the labour market at $t=2$ are a function of his/her talents and the amount of educational expenditure received.
Within each occupation, denote $\textit{w}_{j}^{i} : \Re^2_{+} \rightarrow  \Re_{+}$ as the wage earned by individual \textit{i} endowed with talent $t_j^i$, after receiving an amount $e_j^i \in \Re_+$ of educational expenditure.
For simplicity, additive separability between talent and expenditure within each sector is assumed; i.e. for each $j \in (a,b)$,\footnote{All results go through for more general technologies of educational investments in which talent and educational investments are complementary.}
\begin{equation*}
   \textit{w}_{j}^{i} = \textit{t}_{j}^{i} + \textit{e}_{j}^{i} \ .
\end{equation*}
Each individual receives educational inputs in only one of the two sectors. In the sector for which s/he is not educated, the wage at $t=2$ would be equal to his/her talent (or marginal productivity).

\paragraph{Uncertainty of Preferences}
During the schooling process, the individual does not know what his or her preferences will be at $t=2$. I follow \cite{arrow1995} in modelling uncertainty about one's own preferences. Two features of this framework seem to fit particularly well within Fishkin's account of opportunities.

First, I adopt a view of opportunity as a component of well-being or, in Arrow's terms, a view of \textit{freedom as autonomy}: freedom of choice --- represented by \textit{opportunity sets} ---  itself constitutes part of well-being, independently of the actual choice made by the individual.

Secondly, this freedom of choice is represented by uncertainty about one's future preferences; i.e. \textit{autonomy} is interpreted as freedom to choose one's own preferences.
The conceptual framework devised by \cite{arrow1995} is as follows.\\
Each individual is endowed with an opportunity set $A^i$ whose generic element is denoted by $x^i$. Individual \textit{i}'s  utility at $t=1$ is represented by a utility function $U(x^i, \theta )$, where $\theta$ is a random variable with a (known) distribution which governs how the preferences of the individual will be at $t=2$.\\
For any given opportunity set $A^i$, the achieved utility of the individual would be:
\begin{equation*}
    \mathcal{P}(\theta, A^i) = [ \max   U(x^i, \theta ) \ | \ x^i \in A^i] \ .
\end{equation*}
However, since $\theta$ is not known at $t=1$ when $A^i$ is assigned, the expected value of this payoff is:
\begin{equation*}
    \mathcal{V}(A^i) = \mathbb{E}_{\theta} [ \max  U(x^i, \theta ) \ | \ x^i \in A^i] = \mathbb{E}_{\theta} [\mathcal{P}(\theta, A^i)] \ .
\end{equation*}
This function, $ \mathcal{V}(A^i)$ is referred to in \cite{arrow1995} as \textit{freedom evaluation function}.
It has two basic properties.
\begin{property} \label{property1}
If the opportunity set $A^i$ is included in $B^i$, then, for each $\theta$, 
    \begin{equation*}
          \max  [  U(x^i, \theta) \ |\ x^i  \in A^i]  \leq  \max  [  U(x^i, \theta) \ | \ x^i \in B^i]
    \end{equation*}
    or, equivalently,
    \begin{equation*}
        \mathcal{V}(A^i) \leq \mathcal{V}(B^i), \ \text{if} \ A^i \subseteq B^i .
    \end{equation*}
    \end{property}
    \begin{property}
    Only those elements which maximize $U(x^i, \theta)$ for some $\theta$ matter. Formally, let\\
    \begin{equation*}
    \xi( \theta, A^i) = \argmax [U(x^i, \theta) \ | \ x^i \in A^i]
    \end{equation*}
    and let\\
    \begin{equation*}
        E(A^i) = \text{range}_{\theta} \xi( \theta, A^i).
    \end{equation*}
    \medskip\\
    Then,
    \begin{equation*}
         \mathcal{V}(A^i) =  \mathcal{V}(B^i) \iff E(A^i) = E(B^i) \ .
    \end{equation*} 
    \end{property}

Thus, adding an element to the opportunity set which is not better than all existing elements for at least one possible utility function parameter $\theta$ does not increase freedom nor, consequently, utility.

\subsection{The Opportunity Sets}
I assume that preferences are represented by the following utility function
\begin{equation} \label{eq:utility}
    U(l_a^i, l_b^i, \theta )= \theta l_a^i + (1-\theta) l_b^i \ ,
    \end{equation}
    where $l_{a}^i$ and $l_{b}^i$ denote occupations $a$ and $b$ for individual $i$.\footnote{This choice is motivated by the assumption that each individual can work in a single occupation, not in both. Usually, labour time enters negatively in the utility function as disutility from labour, or positively as leisure. Instead, in this model $l_{a}^{i}$ and $l_{b}^{i}$ represent two normal goods (e.g., ``being a mathematician'' and ``being a musician'') or simply the extensive margin of labour in each occupation. It is straightforward to generalize the model to include consumption and labour time entering negatively into the utility function (the intensive margin).}

I assume that each individual $i$ is endowed with an opportunity set $A^i$ with constant transformation rates given by:
\begin{equation} \label{eq:oppsets}
    (l_{a}^{i} / w_{a}^{i}) + (l_{b}^{i} / w_{b}^{i}) \leq 1 \ .
\end{equation}
This is displayed in \Cref{fig:oppsets}.
\begin{figure}[ht]
\centering
\begin{tikzpicture}[scale=0.8]
\draw [<->] (0,6) node[left]{$l_{b}^{i}$}  -- (0,0) -- (6,0) node[below]{$l_{a}^{i}$};
\draw [very thick] (0,4)node[left]{$w_{b}^{i}$} -- (4,0) node[below]{$w_{a}^{i}$};
\end{tikzpicture}\\
\caption{The Opportunity Sets} \label{fig:oppsets}
\label{fig:oppsets}
\end{figure}
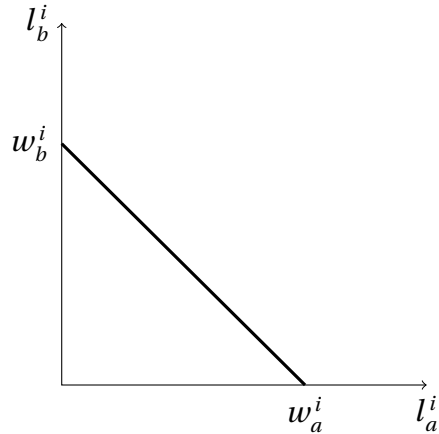\\
Maximizing (\ref{eq:utility}) over (\ref{eq:oppsets}) yields the indirect utility:
\begin{equation*}
    \mathcal{P}(w_{a}^i, w_{b}^i;\theta)= \max \big\{\theta \textit{w}_{a}^i , (1-\theta)\textit{w}_{b}^i\big\} \ .
\end{equation*}
The \textit{freedom evaluation function} is defined analogously as
\begin{equation} \label{eq:fefunction}
    \mathcal{V}(w_{a}^i, w_{b}^i) = \mathbb{E}_\theta [ \max \big\{\theta \textit{w}_{a}^i , (1-\theta)\textit{w}_{b}^i\big\}] \ .
\end{equation}
At $t=2$, when $\theta$ is realized, individual \textit{i} will decide whether to work in occupation \textit{a} or \textit{b} depending on the inequality:
\begin{equation*}
    \theta \textit{w}_{a}^i >  (1-\theta)\textit{w}_{b}^i \ .
\end{equation*}

\paragraph{Endogeneity of Preferences}
Preferences are endogenous in the following sense. The distribution of the taste parameter $\theta$ is allowed to depend on wages in the two sectors, and hence on the educational investments received by individuals.

Formally, denote $F(\theta; w_a^i, w_b^i)$ the distribution of the taste parameter, and $f(\theta; w_a^i, w_b^i)$ the density function. I make the following 
\begin{assumption} \label{assumption2}
    The density $f$ satisfies, $\forall (w_a^i,w_b^i) \in \Re^2_+$,
  \begin{enumerate}
  \item[(i)] $f(\theta; w_a^i, w_b^i)=f(1-\theta;w_b^i,w_a^i)$ \ ;
  \item[(ii)] If $w_a^i=w_b^i$, then $f$ is symmetric on $[0,1]$ \ ;
  \item[(iii)] If $w_a^i \geq (\leq) \ w_b^i$, $f$ is skewed to the left (right).
  \end{enumerate}

\end{assumption}
Condition \textit{(i)} simply means that relabelling the two sectors and the taste parameter does not affect the (expected) utility of individuals.

The other two conditions provide a very minimal constraint on the ``direction" of the endogeneity of preferences. To see why this is the case, recall the \textit{freedom evaluation function} in (\ref{eq:fefunction}), and suppose that $w_a^i=w_b^i$.

Condition \textit{(ii)} says that it is equally likely to observe a realization of $\theta$ above or below $\frac{1}{2}$; in other words, it is \textit{ex ante} equally likely for $i$ to choose occupation $a$ or $b$ at time $t=2$.  

If $w_a^i \geq w_b^i$, then by Condition \textit{(iii)}, $f$ puts at least as much mass to the right of $\frac{1}{2}$ as it puts on its left. Intuitively, this means that individual $i$ is at least as likely to observe a realization of $\theta$ which will induce him/her to choose occupation $a$ as occupation $b$. 

Moreover, \Cref{assumption2} ensures that the \textit{freedom evaluation function} satisfies two important properties. 
\begin{property} \label{property:symmetry}
$\mathcal{V}$ is symmetric.
\end{property}
In addition to this, because of the specification of opportunity sets in (\ref{eq:oppsets}), we also have the following
\begin{property} \label{property:quasiconvexity}
    $\mathcal{V}$ is quasiconvex.
\end{property}
These properties will play an important role in the characterisation of the education policies in \Cref{sec:resourceq} and \Cref{sec:utilityeq}. Their proofs are in \Cref{sec:Appendix}.
\paragraph{The Egalitarian Government}
The government is egalitarian; hence, it cares only about the least well-off. Moreover, the government has to make educational investments at $t=1$, but individuals will learn their preferences only at $t=2$. Therefore, the social welfare function is given by
\begin{equation} \label{SWF}
\mathcal{W} =  \max_{e_a,e_b} \min_i \mathcal{V}(w_a^i,w_b^i) = \max_{e_a,e_b} \min_i \mathbb{E}_{\theta} \Big[\max\big\{\theta w_{a}^i, (1-\theta)w_{b}^i\big\}\Big]
\end{equation}
Finally, I assume that the budget $\Bar{R}$ allocated to educational investment is large, but finite.
\begin{assumption}
\begin{equation*}
 0  \ll \Bar{R} < \infty \ .
\end{equation*}
\end{assumption}
Therefore, all education policies presented in \Cref{sec:resourceq} and \Cref{sec:utilityeq} can be financed.

\section{The Education Policies: Equalization of Resources} \label{sec:resourceq}
Suppose that the egalitarian government seeks to equalize wages among individuals. Given that there is fixed educational budget, $\bar{R}$, the government faces three alternatives. Either it equalizes wages within sector \textit{a}, or sector \textit{b}, or between both.
\begin{definition}
    The \textit{one-school policy} in sector $j=\{a,b\}$ equalizes wages in sector $j$ by investing the per-capita amount $e_{j}^{i} = |{w}^* - t_{j}^{i}|$, for each individual $i$ and some $w^* \in \Re_+$.
\end{definition}
The third possibility is sending individuals to the two different schools, depending on whether they have more talent for sector \textit{a} or sector \textit{b}.

\begin{definition}
    The \textit{two-school policy} equalizes wages in each sector $j=a,b$ by investing the per-capita amount $e_j^i=|\Tilde{w} -  \max_j \left\{t_{a}^i, t_{b}^i \right\}| $, for each individual $i$ and some $\Tilde{w} \in \Re_+$.
\end{definition}
It is straightforward to derive the following lemma.
\begin{lemma} \label{lemma:resorceq}
For any pluralistic distribution of talents and a given budget $\Bar{R}$, the two-school policy enables to equalize wages at a higher level than either of the one-school policies within each sector.
\end{lemma}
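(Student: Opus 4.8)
The plan is to reduce the whole comparison to a single pointwise inequality between $\max\{t_a^i,t_b^i\}$ and each coordinate $t_j^i$, exploiting the fact that under a binding budget every policy equalizes wages at a level equal to $\bar{R}$ plus the relevant average talent. Normalize the continuum to have unit mass. Because raising a wage expands the opportunity set in (\ref{eq:oppsets}) and hence raises $\mathcal{V}$ by \Cref{property1}, and because the government is egalitarian, it is optimal to exhaust the entire budget; moreover, since $\bar{R}$ is large, the common target wage lies above the support of the relevant talents, so each investment $e_j^i=\lvert\,\cdot\,\rvert$ resolves to a nonnegative difference and feasibility is never an issue.

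First I would compute the equalized wage under the one-school policy in sector $j$. The binding budget constraint $\int (w^{*}-t_j^i)\,di=\bar{R}$ pins it down as $w^{*}=\bar{R}+\bar{t}_j$, where $\bar{t}_j:=\E[t_j^i]$ denotes the average talent in sector $j$. Next, under the two-school policy each individual is schooled in the sector of his/her higher talent, so the constraint $\int\!\big(\tilde{w}-\max\{t_a^i,t_b^i\}\big)\,di=\bar{R}$ gives $\tilde{w}=\bar{R}+\E[\max\{t_a^i,t_b^i\}]$.

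The comparison then collapses to $\tilde{w}-w^{*}=\E[\max\{t_a^i,t_b^i\}]-\bar{t}_j$. Since $\max\{t_a^i,t_b^i\}\ge t_j^i$ holds pointwise for each $j\in\{a,b\}$, monotonicity of the integral already yields $\tilde{w}\ge w^{*}$. To upgrade this to the strict inequality claimed, I would invoke pluralism: reading the assumption as placing positive mass on each side of the diagonal $t_a=t_b$, there is a set of positive measure on which $\max\{t_a^i,t_b^i\}>t_j^i$ strictly, so the integral inequality is strict and $\tilde{w}>w^{*}$ simultaneously for $j=a$ and $j=b$.

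The only delicate point is this last strictness step. Taken literally, ``at least two individuals'' is a measure-zero event in a continuum and would deliver merely the weak inequality; the substantive content of the lemma requires the pluralistic assumption to be understood as putting positive mass strictly on each side of the diagonal (equivalently, neither coordinate dominates the other almost everywhere). Apart from that caveat, the argument is entirely elementary: it uses nothing beyond the pointwise domination of the coordinatewise maximum and linearity of the expectation, with the large-budget assumption guaranteeing that the equalizing investments are feasible and nonnegative.
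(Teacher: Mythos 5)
Your proof is correct and is essentially the paper's own (implicit) argument: the paper states the lemma as ``straightforward to derive'' without giving a formal proof, and the budget-accounting identities you derive --- $w^*=\bar R+\E[t_j^i]$ and $\tilde w=\bar R+\E[\max\{t_a^i,t_b^i\}]$, so that the gap is $\E\big[\max\{t_a^i,t_b^i\}-t_j^i\big]\ge 0$ --- are exactly what underlies its Figure 2 illustration (one-school level $1$ versus two-school level $1.25$). Your caveat on strictness is also well-taken: with a continuum of agents the paper's literal ``at least two individuals'' definition of a pluralistic distribution delivers only the weak inequality, and the positive-mass reading you adopt (positive measure of agents on each side of the diagonal $t_a=t_b$) is what the lemma actually requires.
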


\subsection{The Evaluation of the Three Education Policies}

It is straightforward to show that an egalitarian government will choose the two-school policy.
To illustrate the argument, consider the distribution of talents given by $t_a^i +t_b^i=1$ in \Cref{Fig:eqwages}.
\begin{figure}[ht]
\centering
\begin{tikzpicture}[scale=0.8]
\draw [<->] (0,8) node[left]{$t_{b} + e_{b}$}  -- (0,0) -- (8,0) node[below]{$t_{a} + e_{a}$};
\draw [very thick] (0,4.8)node[left]{1} -- (4.8,0) node[below]{1};
\draw [dashed] (0,0)node[left]{0} -- (4.8,4.8);
\draw [draw=blue, very thick] (0,4.8) -- (4.8,4.8);
\draw [draw=blue, very thick] (4.8,0) -- (4.8,4.8);
\draw[thin] (2.4, 2.4) -- (2.4, 6);
\draw[thin] (2.4, 2.4) -- (6, 2.4);
\draw [draw=red, very thick] (0, 6) node[left]{1.25} -- (2.4, 6);
\draw [draw=red, very thick] (6,0) node[below]{1.25} -- (6, 2.4);
\end{tikzpicture}\\
\caption{Evaluating the three policies \ .} \label{Fig:eqwages}
\end{figure}
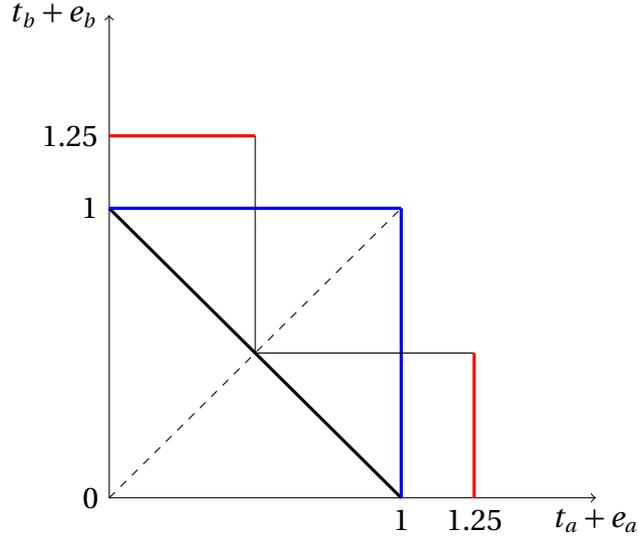

If wages were equalized in sector $a$, then the most talented individual in the distribution, located at (1,0), would become \textit{ex post} the least well-off: everyone will have the same wage in sector $a$ as him/her, but would have strictly higher wage in sector $b$. This is represented by the horizontal blue segment.

The case for sector $b$ would be analogous, but in that case, the individual located at (0,1) would be the least well-off \textit{ex post}. This is given by the vertical blue segment.
Considering these two individuals, either of the one-school policies would bring one of the two at (1,1) while leaving the other one at his/her own original coordinates.

However, the two-school policy --- given by the two red segments --- would yield an expected indirect utility level to both individuals which would be higher than their worst-case scenario. This follows from \Cref{property1}. \Cref{lemma:resorceq} generalises the result to any pluralistic distribution of talents. Thus, we have the following theorem.
\begin{theorem}
The two-school policy yields a higher social welfare than any of the one-school policies.
\end{theorem}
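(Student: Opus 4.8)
The plan is to show that the social welfare of each policy---which is a minimum of the freedom evaluation function $\mathcal{V}$ over the whole population---collapses to the value of $\mathcal{V}$ at a single \emph{worst-off} opportunity set, and then to compare the three resulting numbers. The two ingredients that make this reduction possible are the monotonicity of $\mathcal{V}$ (\Cref{property1}) and its symmetry (\Cref{property:symmetry}), while the size of the gap between the policies is controlled by \Cref{lemma:resorceq}. Throughout I use that each policy exhausts the whole budget, so that it is pinned down by a single equalized wage level.

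First I would carry out the reduction. Under the one-school policy in sector $a$ every agent receives the same wage $w^*$ in sector $a$ and keeps her natural talent $t_b^i$ in sector $b$, so her value is $\mathcal{V}(w^*,t_b^i)$. Since $\mathcal{V}$ is nondecreasing in each argument by \Cref{property1}, the worst-off agent is the one with the smallest talent in the non-educated sector, and the welfare of the policy equals $\mathcal{V}(w^*,\underline t_b)$ with $\underline t_b=\inf_i t_b^i$; the case of sector $b$ is symmetric. Under the two-school policy every agent receives the boosted wage $\tilde w$ in her strong sector and keeps her weaker talent $\min\{t_a^i,t_b^i\}$ in the other; using symmetry (\Cref{property:symmetry}) to place the strong sector first, her value is $\mathcal{V}(\tilde w,\min\{t_a^i,t_b^i\})$, so the welfare of the two-school policy equals $\mathcal{V}(\tilde w,\underline\tau)$ with $\underline\tau=\inf_i\min\{t_a^i,t_b^i\}$. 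By \Cref{lemma:resorceq} the equalized level strictly improves, $\tilde w>w^*$.

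It then remains to compare $\mathcal{V}(\tilde w,\underline\tau)$ with $\mathcal{V}(w^*,\underline t_b)$, and with the analogous quantity for sector $b$. When the two worst-off opportunity sets share the same weak-sector coordinate---$\underline\tau=\underline t_b$, as on the symmetric frontier of \Cref{Fig:eqwages}, where both are the axis points with zero weak talent---monotonicity together with $\tilde w>w^*$ settles the claim at once, reproducing the figure. The main obstacle is the general case: because $\min\{t_a^i,t_b^i\}\le t_b^i$, one always has $\underline\tau\le\underline t_b$, so the two-school worst-off sits at a strictly lower weak-sector wage, and the larger equalized wage $\tilde w$ must be shown to more than compensate for this loss. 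Monotonicity alone no longer suffices, and I would close the gap quantitatively, using the explicit form of $\mathcal{V}$ in \eqref{eq:fefunction} together with the budget identity behind \Cref{lemma:resorceq}, which ties the gain $\tilde w-w^*$ to the mass and depth of the cross-type talent, and the quasiconvexity of $\mathcal{V}$ (\Cref{property:quasiconvexity}), which encodes that the freedom evaluation function rewards specialized opportunity sets. Verifying that the wage gain dominates the weak-sector loss exactly at the worst-off agent---possibly leaning on the directional restrictions on the endogeneity of preferences in \Cref{assumption2}---is where the real work lies.
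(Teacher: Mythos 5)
Your proposal takes essentially the same route as the paper: reduce each policy's social welfare to the value of $\mathcal{V}$ at a single worst-off bundle (via the monotonicity of \Cref{property1} and the symmetry of \Cref{property:symmetry}), then compare the worst-off bundles across policies using \Cref{lemma:resorceq}. On the symmetric distribution $t_a^i+t_b^i=1$ of \Cref{Fig:eqwages}, your argument and the paper's coincide exactly: there the worst-off agents under all three policies have weak-sector coordinate $0$, so the comparison is $\mathcal{V}(\tilde w,0)$ against $\mathcal{V}(w^*,0)$, and monotonicity together with $\tilde w>w^*$ concludes.

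The step at which you stop --- the general pluralistic distribution, where $\underline\tau=\inf_i\min\{t_a^i,t_b^i\}$ can be strictly below $\underline t_b=\inf_i t_b^i$, so that $(\tilde w,\underline\tau)$ and $(w^*,\underline t_b)$ are not nested as opportunity sets and \Cref{property1} is silent --- is a genuine gap, but it is a gap in the paper as much as in your proposal. The paper's entire treatment of the general case is the sentence that \Cref{lemma:resorceq} ``generalises the result to any pluralistic distribution of talents,'' which is an assertion rather than a proof: \Cref{lemma:resorceq} only delivers $\tilde w>w^*$, the ingredient you have already used, and says nothing about the accompanying drop in the retained weak-sector talent. (Take two agents with talents $(5,3)$ and $(2,4)$: under the one-school policy in sector $a$ the worst-off keeps weak talent $3$, under the two-school policy only $2$.) So your diagnosis of where the real work lies is accurate, and your write-up is exactly as complete as the paper's own argument --- arguably more honest about it. Two remarks on your sketch for closing the gap: the largeness of the budget ($0\ll\bar R$) is presumably what must drive the estimate, since for wages large relative to talents $\mathcal{V}(w,c)$ in (\ref{eq:fefunction}) behaves roughly like $w\,\mathbb{E}[\theta]$, so the strong-sector gain $\tilde w-w^*$ can swamp a bounded loss in weak-sector talent (though the endogenous density, which differs across the two bundles being compared, needs separate handling via \Cref{assumption2}). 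But quasiconvexity (\Cref{property:quasiconvexity}) is unlikely to help: it bounds $\mathcal{V}$ at intermediate points from above by its values at extreme points, which is the wrong direction for the lower bound you need on $\mathcal{V}(\tilde w,\underline\tau)$, and indeed the paper never actually invokes it in this proof.
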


\section{The Education Policies: Equalization of Expected Utilities} \label{sec:utilityeq}
The above set of education policies might seem highly questionable on at least two levels.
On the one hand, it was assumed that everyone could be educated to reach any marginal productivity level. This assumption --- sometimes called \textit{educability} in the literature (e.g., see \cite{kranic1998}) ---  might well be deemed as highly unrealistic, even after weighting the education cost of low-talent individuals more heavily.

In fact, it would not be unreasonable to assume that some individuals might never reach the marginal productivity levels of the high-talent ones, not even with an arbitrarily large amount of educational expenditure.

On the other hand, the education policies presented above imply that individuals with equal taste parameter $\theta$ are likely end up with a different utility level, not only \textit{ex ante} (at $t=1$), but also \textit{ex post} (at $t=2$).

In the literature regarding the axiomatisation  of EOp, one of the commonly-used axioms is referred to as \textit{Equal Welfare for Equal Preference} (e.g., \cite{fleurbaey2008}). Under the assumption that skills and talents are individual characteristics that call for compensation, whereas taste differences should not, the axiom requires that the welfare level of individuals with the same taste parameter be the same.
This axiom is likely to be violated by the education policies discussed above.

Therefore, I now assume that the government aims at investing in individual education in such a way as to bring everybody to the same \textit{expected} indirect utility level curve. Denote $\mathcal{V}^{-1}(k^*)$ the highest expected indirect utility level curve that all individuals can reach, given the available educational budget $\Bar{R}$.\footnote{Formally, this means defining the indifference map  $\{\mathcal{V}^{-1}(k): k \in \Re_+ \} $, with $\mathcal{V}^{-1}(k)=\{(w_{a},w_{b}) \in \Re^2_{+}: \mathcal{V}(w_{a},w_{b})=k\}$. Then, $k^*$ is the unique value of the level curve such that the total cost of the policy is equal to $\Bar{R}$.}
\begin{definition}
    The \textit{one-school policy} in sector $j=\{a,b\}$ equalizes expected utilities in sector $j$ by investing the per-capita amount $e_{j}^{i} =|\mathcal{V}^{-1}(k^*)-t_{j}^{i}|$, for each individual $i$.
\end{definition}
\begin{definition}
    The two-school policy equalizes utilities in each sector $j=a,b$ by investing the per-capita amount $e_j^i=|\mathcal{V}^{-1}(k^*)- \max_j \{t_{a}^i, t_{b}^i\}| $ for each individual $i$.
\end{definition}

\subsection{The Evaluation of The Education Policies}
Consider again the case in which the distribution of talents is given by $t_a^i+t_b^i=1$. It is straightforward to show that the two-school policy is yields a higher social welfare, for \textit{any} set of welfare weights in the social welfare function (\ref{SWF}).

To see this, suppose that the highest level curve allowed by the educational budget in both one-school policies is the one in blue in \Cref{fig:utilityeq}.
\begin{figure}[ht]
\centering
\begin{tikzpicture}[scale=0.8]
\draw [<->] (0,8) node[left]{$t_{b} + e_{b}$}  -- (0,0) -- (8,0) node[below]{$t_{a} + e_{a}$};
\draw [very thick] (0,5.96)node[left]{1} -- (5.96,0) node[below]{1};
\draw [dashed] (0,0)node[left]{0} -- (4,4);
\draw [draw=red, thick] (3,3) -- (3,4.88);
\draw[draw=red, thick] (3,3) -- (4.88, 3);
\draw [draw=blue, thick] (0,5.96) to [out=353,in=99] (5.96,0); 
\draw [draw=red, thick] (3,4.88) to [out=325,in=125] (4.88, 3)  ;
\end{tikzpicture}
\caption{Evaluation of the three policies} \label{fig:utilityeq}
\end{figure}
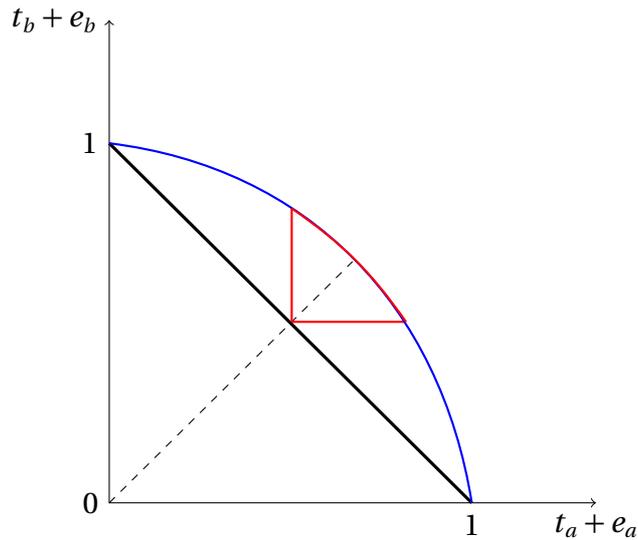
Each one-school policy entails investing in the education in one of the two sectors so as to move every individual from the distribution of talents to the blue curve, either horizontally or vertically.

However, the two-school policy would yield the same result, but at a lower cost; in fact, the red area would be saved. Hence, given that the  budget is fixed, the government could increase educational expenditures such that everyone reaches a strictly higher indifference curve. This result again generalizes to any pluralistic distribution of talents. In fact, we have the following theorem.
\begin{theorem}
    For any set of welfare weights, the two-school policy yields a higher social welfare than any of the one-school policies.
\end{theorem}

\section{Conclusions} \label{sec:conclusions}
In this paper, I tried to provide an answer to the following question : Is there a trade-off between EOp and \textit{Opportunity Pluralism}?

In pursuing this task, I tried to embed Fishkin's view about opportunities into a simple economic model in which an egalitarian government seeks to equalize opportunities among individuals whose preferences are uncertain at the time of educational investment, as well as endogenous to it. 

Under very mild assumptions, my analysis clearly goes against this claim. Moreover, the results seem to suggest that economic efficiency, both in terms of utility maximization and exhaustion of the available budget, \textit{implies} POp. In fact, an egalitarian government will always prefer a pluralistic opportunity structure (two-school society) over a unitary opportunity structure (one-school society).

Whereas the present analysis questions the fundamental ideas at the heart of \textit{Opportunity Pluralism}, it does not consider the new theory itself. The main idea is to define EOp in a \textit{negative} way; opportunities are equal among individuals when society is able to remove the different types of \textit{bottlenecks} which hinder the full realization of each person's ambitions, goals and attitudes. This is left for future research.

\bibliographystyle{apalike}
\bibliography{EOp}

\pagebreak

\appendix

\section{Proofs} \label{sec:Appendix}
\begin{proof}[Proof of \Cref{property:symmetry}]
By \Cref{assumption2} and the definition of $\mathcal V$ we have that
    \begin{align*}
        \mathcal{V}(w_{a}^i, w_{b}^i)&=\int \mathcal{P}(w_{a}^i, w_{b}^i;\theta) f(\theta; w_a^i, w_b^i) \ d\theta\\
        &= \int \max \big\{\theta \textit{w}_{a}^i , (1-\theta)\textit{w}_{b}^i\big\} f(\theta; w_a^i, w_b^i) \ d\theta\\
        &= \int \max \big\{(1-\theta)\textit{w}_{b}^i, \theta \textit{w}_{a}^i\big\} f(1-\theta;w_b^i,w_a^i) \ d\theta\\
        &= \int \mathcal{P}(w_{b}^i, w_{a}^i;1-\theta) f(1-\theta;w_b^i,w_a^i) \ d\theta\\
        &=\mathcal{V}(w_b^i,w_a^i) \ .
     \end{align*}
\end{proof}

\begin{proof}[Proof of \Cref{property:quasiconvexity}]
Consider two vectors, $(w_a^i, w_b^i)$, $(w_a^{'i}, w_b^{'i}) \in \Re_+^2$, and any convex combination between them, $(\Tilde{w}_a^i,\Tilde{w}_b^i) \in \Re_+^2$. 
Quasiconvexity requires that
\begin{equation} \label{eq:quasiconvexity}
    \max \big\{ \mathcal{V}(w_a^i, w_b^i),\mathcal{V}(w_a^{'i}, w_b^{'i}) \big\} \geq \mathcal{V}(\Tilde{w}_a^i,\Tilde{w}_b^i)
\end{equation}
for all $(w_a^i, w_b^i)$, $(w_a^{'i}, w_b^{'i})$, and $(\Tilde{w}_a^i,\Tilde{w}_b^i)$. We need to distinguish two cases.

\subsection*{Case I} If $w_a^i \geq (\leq) \ w_a^{'i}$ and $w_b^i \geq (\leq) \ w_b^{'i}$, then (\ref{eq:quasiconvexity}) follows from \Cref{property1}.
\subsection*{Case II} Consider now the case $w_a^i \geq  w_a^{'i}$ and $w_b^i \leq  w_b^{'i}$.
We have the following claim.
\begin{claim}
 $\max \big\{ \mathcal{V}(w_a^i, w_b^i),\mathcal{V}(w_a^{'i}, w_b^{'i}) \big\} \geq \mathcal{V}\big(\max\{w_a^i, w_a^{'i}\}, \max\{w_b^i, w_b^{'i}\}\big) \ .$ 
\end{claim}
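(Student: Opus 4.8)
My plan is to collapse the two-dimensional inequality into a single one-dimensional comparison along one edge of the rectangle spanned by the two vectors, and then to let the endogeneity of preferences encoded in \Cref{assumption2} carry the argument.

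First I would do the elementary bookkeeping. Under the case hypothesis $w_a^i\geq w_a^{'i}$ and $w_b^i\leq w_b^{'i}$, the componentwise maximum on the right-hand side is $M:=\big(\max\{w_a^i,w_a^{'i}\},\max\{w_b^i,w_b^{'i}\}\big)=(w_a^i,w_b^{'i})$, which dominates both $P:=(w_a^i,w_b^i)$ and $Q:=(w_a^{'i},w_b^{'i})$ componentwise. The useful structural fact is that $M$ and $P$ share the first coordinate $w_a^i$ and differ only in the second (with $w_b^i\leq w_b^{'i}$), whereas $M$ and $Q$ share the second coordinate $w_b^{'i}$ and differ only in the first. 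Using symmetry (\Cref{property:symmetry}), I may assume without loss of generality that the high wage at $M$ is $w_a^i$, i.e. $w_a^i\geq w_b^{'i}$; otherwise the reflection $(w_a,w_b)\mapsto(w_b,w_a)$ preserves every value of $\mathcal{V}$ and returns a configuration of the same type with the two sectors interchanged.

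Next I would discard $Q$ and isolate the genuine content. Passing from $M$ to $Q$ lowers the \emph{larger} wage $w_a^i$ to $w_a^{'i}$: this decreases the inner payoff pointwise and, by \Cref{assumption2}(iii), narrows the gap $w_a-w_b$ and so makes the taste density $f$ \emph{less} favourable to the high-paying sector, whence both effects push $\mathcal{V}$ down and $\mathcal{V}(Q)\leq\mathcal{V}(M)$. Thus $Q$ can never be the maximiser, and since $\max\{\mathcal{V}(P),\mathcal{V}(Q)\}\geq\mathcal{V}(P)$ the whole claim reduces to the single inequality $\mathcal{V}(P)\geq\mathcal{V}(M)$. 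Passing from $M$ to $P$ instead lowers only the \emph{smaller} wage from $w_b^{'i}$ to $w_b^i$, holding $w_a^i$ fixed and \emph{widening} the gap $w_a^i-w_b$. Here two effects compete: the direct pointwise effect (for each $\theta$ the integrand $\max\{\theta w_a^i,(1-\theta)w_b\}$ is nondecreasing in $w_b$) pushes $\mathcal{V}$ down, while the distributional effect of \Cref{assumption2}(iii) --- a wider gap skews $f$ further to the left, concentrating mass on high $\theta$ at which the individual chooses sector $a$ and collects the large payoff $\theta w_a^i$ --- pushes $\mathcal{V}$ up. The aim is to show the second effect dominates, giving $\mathcal{V}(P)\geq\mathcal{V}(M)$.

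The main obstacle is precisely this last domination. \Cref{assumption2}(iii) fixes only the \emph{direction} of the skew, so the delicate step is to extract enough about how the mass of $f$ is redistributed as the gap $w_a^i-w_b$ grows to guarantee that the endogenous reweighting toward the high-paying sector more than offsets the loss of the pointwise term; this is exactly where the endogeneity of preferences, rather than the mere set-inclusion monotonicity of \Cref{property1}, is indispensable, since monotonicity of the integrand alone points the wrong way. Concretely I would differentiate $\mathcal{V}(w_a^i,w_b)$ in $w_b$ along the edge, use the fact that the two branches of the payoff coincide at the switching point $\theta^{*}=w_b/(w_a^i+w_b)$ to cancel the boundary term, and then sign the residual distributional integral using \Cref{assumption2}(iii); controlling this residual is the crux of the proof.
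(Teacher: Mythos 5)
There is a genuine gap, and it sits exactly at the step you yourself flag as ``the crux.'' Your reduction discards $Q=(w_a^{'i},w_b^{'i})$ and aims to prove the single inequality $\mathcal{V}(P)\geq\mathcal{V}(M)$ for $P=(w_a^i,w_b^i)$ and $M=(w_a^i,w_b^{'i})$. That inequality is strictly stronger than the claim, and it is false under densities the model admits. Since $P$ lies componentwise below $M$, the opportunity set at $P$ is contained in that at $M$, so \Cref{property1} --- which the paper invokes in this very quasiconvexity proof to conclude $\mathcal{V}\big(\max\{w_a^i,w_a^{'i}\},\max\{w_b^i,w_b^{'i}\}\big)\geq\mathcal{V}(\Tilde{w}_a^i,\Tilde{w}_b^i)$ --- already pushes the other way; your step would force $\mathcal{V}(P)=\mathcal{V}(M)$ along every such edge, i.e.\ $\mathcal{V}$ would not depend on the smaller wage at all. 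Concretely, take the wage-independent case that the paper's own proof treats first: with $f$ uniform on $[0,1]$, $P=(1,0)$, $Q=(0,1)$, $M=(1,1)$, one computes $\mathcal{V}(P)=\mathcal{V}(Q)=1/2$ while $\mathcal{V}(M)=3/4$. Because \Cref{assumption2} pins down only the \emph{direction} of the skew of $f$ and never its magnitude, a density family with arbitrarily weak skew response defeats your inequality by continuity; no differentiation of $\mathcal{V}(w_a^i,w_b)$ in $w_b$ along the edge, however carefully the boundary term at the kink is cancelled, can sign a quantity whose sign genuinely depends on quantitative information the assumption does not contain. The plan cannot be completed as stated.

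The idea your plan is missing --- and around which the paper's proof is organized --- is that \emph{both} corners must be kept in play. Writing $g_P(\theta)=\max\{\theta w_a^i,(1-\theta)w_b^i\}$, $g_Q(\theta)=\max\{\theta w_a^{'i},(1-\theta)w_b^{'i}\}$ and $g_M(\theta)=\max\{\theta w_a^i,(1-\theta)w_b^{'i}\}$, one has the exact pointwise interleaving $g_M=\max\{g_P,g_Q\}$, with $g_Q\equiv g_M$ on $[0,\theta^*]$ and $g_P\equiv g_M$ on $[\theta^*,1]$, where $\theta^*=w_b^{'i}/(w_a^i+w_b^{'i})$. The paper integrates this identity in the wage-independent case and, in the endogenous case, compares the two distribution functions at $\theta^*$ via the skewness conditions, so that $P$ covers $M$'s payoff where its density piles up mass and $Q$ covers it on the complementary region. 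By dropping $Q$ at the outset you lose exactly the part of the integrand that offsets $P$'s deficit on $[0,\theta^*]$, and that deficit is unrecoverable. One caveat worth recording for the author as much as for you: even with both corners retained, integrating the pointwise identity against a \emph{common} density gives $\max\{\mathcal{V}(P),\mathcal{V}(Q)\}\leq\mathcal{V}(M)$ (a maximum of integrals never exceeds the integral of the pointwise maximum), not the equality asserted in the paper's first case; so the claim genuinely requires the endogenous skew to do quantitative work that \Cref{assumption2} does not currently supply. Your instinct that the endogeneity is indispensable is therefore sound --- the failure is in which one-dimensional comparison you reduced the problem to.
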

\begin{proof} \ If  $f$ does not depend on wages, then 
\begin{align*}
    &\max \big\{ \mathcal{V}(w_a^i, w_b^i),\mathcal{V}(w_a^{'i}, w_b^{'i}) \big\} = \\
    &\max \bigg\{ \int \max \big\{\theta \textit{w}_{a}^i , (1-\theta)\textit{w}_{b}^i\big\} f(\theta; w_a^i, w_b^i) \ d\theta , \int \max \big\{\theta \textit{w}_{a}^{'i} , (1-\theta)\textit{w}_{b}^{'i}\big\} f(\theta; w_a^{'i}, w_b^{'i}) \ d\theta \bigg\} =\\
     &\max \bigg\{ \int \max \big\{\theta \textit{w}_{a}^i , (1-\theta)\textit{w}_{b}^i\big\} f(\theta) \ d\theta , \int \max \big\{\theta \textit{w}_{a}^{'i} , (1-\theta)\textit{w}_{b}^{'i}\big\} f(\theta) \ d\theta \bigg\} = \\
     &\int \max \bigg\{ \max \big\{\theta \textit{w}_{a}^i , (1-\theta)\textit{w}_{b}^i\big\}  , \max \big\{\theta \textit{w}_{a}^{'i} , (1-\theta)\textit{w}_{b}^{'i}\big\} \bigg\} f(\theta) \ d\theta = \\
     &\int \max \bigg\{\theta \max\{\textit{w}_{a}^i , w_a^{'i}\},  (1-\theta)\max\{\textit{w}_{b}^i, w_b^{'i} \} \bigg\} f(\theta) \ d\theta = \\
     &\mathcal{V}\big(\max\{w_a^i, w_a^{'i}\}, \max\{w_b^i, w_b^{'i}\}\big) \ .
\end{align*}
Consider now the case in which the two densities differ. 

Since $w_b^i \leq  w_b^{'i}$, whenever $\theta \leq \frac{w_b^{'i}}{w_a^i + w_b^{'i}}$, then $\max \big\{ \mathcal{V}(w_a^i, w_b^i),\mathcal{V}(w_a^{'i}, w_b^{'i}) \big\} = \mathcal{V}(w_a^{'i}, w_b^{'i})$. Moreover, since $w_a^i \geq  w_a^{'i}$, whenever $\theta \geq \frac{w_b^{'i}}{w_a^i + w_b^{'i}}$, then $\max \big\{ \mathcal{V}(w_a^i, w_b^i),\mathcal{V}(w_a^{'i}, w_b^{'i}) \big\} = \mathcal{V}(w_a^{i}, w_b^{i})$. In other words, the same reasoning applies inside the integrals, but now we are comparing values of $\theta$ from two different distributions.

However, by \Cref{assumption2}
\begin{equation*}
    1-F\bigg(\frac{w_b^{'i}}{w_a^i + w_b^{'i}}; \ w_a^i, w_b^i\bigg) \geq 1-F\bigg(\frac{w_b^{'i}}{w_a^i + w_b^{'i}}; \ w_a^{'i}, w_b^{'i}\bigg)
\end{equation*}
Therefore, the skewness of the densities makes the total mass of $\max \big\{ \mathcal{V}(w_a^i, w_b^i),\mathcal{V}(w_a^{'i}, w_b^{'i}) \big\}$ add up to more than 1. The claim then follows.
\end{proof}
Then, we have that
\begin{equation*}
    \max \big\{ \mathcal{V}(w_a^i, w_b^i),\mathcal{V}(w_a^{'i}, w_b^{'i}) \big\} \geq \mathcal{V}\big(\max\{w_a^i, w_a^{'i}\}, \max\{w_b^i, w_b^{'i}\}\big) \geq \mathcal{V}(\Tilde{w}_a^i,\Tilde{w}_b^i) \ ,
\end{equation*}
where the last inequality again follows from \Cref{property1}.
\end{proof}

\end{document}